\newcommand{\noun}[1]{\textsc{#1}}
\providecommand{\algorithmname}{Algorithm}
\numberwithin{equation}{section}
\numberwithin{figure}{section}
\theoremstyle{plain}
\newtheorem{thm}{\protect\theoremname}
\date{}
\newcommand{\algrule}[1][.2pt]{\par\vskip.5\baselineskip\hrule height #1\par\vskip.5\baselineskip}
\providecommand{\theoremname}{Theorem}
\newcommand{\eat}[1]{}
\begin{document}

\title{Distributed Approximation Algorithms for the \\Multiple Knapsack
  Problem}

\author{Ananth Murthy \and Chandan Yeshwanth \and Shrisha Rao}

\maketitle

\begin{abstract}
We consider the distributed version of the Multiple Knapsack Problem
(MKP), where $m$ items are to be distributed amongst $n$ processors,
each with a knapsack.  We propose different distributed approximation
algorithms with a tradeoff between time and message complexities. The
algorithms are based on the greedy approach of assigning the best item
to the knapsack with the largest capacity.  These algorithms obtain a
solution with a bound of $\frac{1}{n+1}$ times the
optimum solution, with either $\mathcal{O}\left(m\log n\right)$ time
and $\mathcal{O}\left(m n\right)$ messages, \emph{or}
$\mathcal{O}\left(m\right)$ time and $\mathcal{O}\left(mn^{2}\right)$
messages.
\end{abstract}


\section{Introduction} \label{introduction}

\nocite{pisinger1995algorithms,fidanova2005heuristics,pisinger1999exact,pisinger2005hard}

The Multiple Knapsack Problem (MKP) is a well known optimization
problem which has been studied
extensively~\cite{MartToth90,dawande2000}.  This problem is NP hard,
and can be solved exactly with dynamic programming.  The standard MKP
is, however, only studied in the centralized settings, and its
analogue for distributed algorithms~\cite{AW2004,Lyn96} is heretofore
unknown.

In the distributed setting where knapsacks are dispersed, computation
is divided among different processors that can only communicated by
message-passing, this problem can be a useful model for certain
problems arising in distributed systems, e.g., a data center where
various jobs of different lengths and priority are delegated to
machines with limited
resources~\cite{li2015,korupolu2009,antoniadis2013}.  Then, the
process of selecting the optimum set of jobs to complete with these
limited resources is equivalent to the MKP.  Here, jobs are equivalent
to items, where the processing time of a job is equivalent to the
weight of the item and the priority of the job is equivalent to the
price of the item. The limited machine resources (like processing time
or processing power) is analogous to the fixed capacity of knapsacks.
The question of distributing load across multiple servers is a matter
of practical interest~\cite{willehadson2010method}.

While there is little pertinent work on distributed algorithms for the
MKP, there is pertinent literature on the centralized problem as
arising in various application domains.  Nogueira \emph{et
  al.}~\cite{Nogueira2012} attempt to schedule real time parallel jobs
which are not all known beforehand.  This work attempts to schedule
all jobs arriving real time, in the most efficient way possible. We
however propose a model for a different scenario, where the server
resources (i.e., time) are fixed and the objective is to complete as
many jobs as possible (with allowance for weights for each job). Most
scheduling algorithms assume that balancing load across a servers in a
distributed network is a good approach to obtain an optimal or close
to optimal schedule. This however is not always the case (especially
in the cases where job distributions are heavy tailed)
\cite{harchol98task}.  This finding supports the choice of MKP as a
model for job scheduling, as solving the MKP is inherently different
from balancing the load across all servers.

Islam \emph{et al.}~\cite{islam2009heuristic} look at scheduling jobs
as a multidimensional knapsack problem, where each dimension is
associated with a resource and where each job with some revenue. This
work follows a divide and conquer approach and tries to combine
individual solutions obtained.  However, this model is suited for a
single processor with multiple resources rather than a model with
multiple processors, which is the problem we attempt to solve.  Another
application of the MKP is the Multiple Subtopic Knapsack problem,
to achieve search result diversification~\cite{Yu2014}.  A part of the
knapsack is allocated for and filled with relevant results while the
remaining capacity is used to show diverse results.  Each subtopic is
treated as one of the multiple knapsacks in the standard MKP.

There exist several approximation algorithms based on dynamic
programming after rounding, integer linear programming (ILP) or
various greedy approaches to solve this problem in polynomial time
which obtain solutions within a certain bound of the optimum
solution~\cite{Chekuri00}.  However, LP/ILP approaches to solve the
problem are not apt in a distributed system, as they lead to
non-polynomial message complexity.  Similarly, Bersekas~\cite{Bert82}
has proposed an algorithm for dynamic programming on a distributed
system, but this method also has exponential time/message complexity
in the worst case.  These methods are not particularly suited for the
distributed setting. This conclusion is echoed again by Paschalidis
\emph{et al.}~\cite{Paschalidis2015} who formulate job scheduling as
a Maximum Weighted Independent Set problem.  They use a relaxed linear
programming approach to solve it.  The solution obtained is close to
optimal; however this method requires a non-trivial number of
iterations to converge.

We hence attempt to develop a distributed approximation algorithm
which achieves a trade-off between optimal performance and time and
message complexity.  We aim for an algorithm that has a message
complexity close to $\mathcal{O}(mn)$, as anything higher is
unacceptable in large networks (large $n$) or with a large number of
items (large $m$).  Thus, we attempt to solve this problem in a
distributed system with a focus on low message complexity.

We consider a generalized MKP in a distributed setting, where $n$
processors $p_{j}$ each own a single knapsack $k_{j}$ and have access
to a common pool of items.  Each knapsack has a capacity $W_{j}$.
$K_{j}$ denotes the set of items assigned to the knapsack $k_{j}$.
There are $m$ items indexed by $i$, each having a fixed weight $w_{i}$
and profit or cost $c_{i}$ associated with it.  The objective here is
then to assign each item uniquely to at most one knapsack in such a
way that the sum of prices of all the items across all knapsacks is
maximized, and the sum of weights of the items assigned to every
knapsack is less than the capacity of that knapsack. Mathematically,
the objective function
\[
C=\sum_{j}\sum_{i}c_{i}x_{ij}
\]
has to be maximized, under the constraints
\[
\forall j,\sum_{i}w_{i}x_{ij}<W_{j}
\]
\[
x_{ij}\leq1
\]
 where
\[
x_{ij}=\begin{cases}
1 & \mbox{if item \ensuremath{i} is assigned to \ensuremath{k_{j}}}\\
0 & \mbox{otherwise}
\end{cases}
\]

It will also be useful to define the profit of a knapsack as the sum
of the profits of all the items assigned to that knapsack, i.e.,
$c\left(K_{j}\right)=\sum_{i}c_{i}x_{ij}$.  Further, we define the
notion of the remaining capacity of a knapsack, $r_{j}$, as the
difference of the capacity of the knapsack and the sum of the weights
of the items assigned to this knapsack, i.e.,
$r_{j}=W_{j}-\sum_{i}w_{i}x_{ij}$.

\subsection{The Model}

There are $n$ processors $p_{j}$, $j=1$ to $n$, fully connected
to each other.  We assume there is a distinguished node $S$ (which
can be thought of as the source of these items or a dispatcher for
jobs within a distributed system), with $\left(c_{i},w_{i}\right)$
for each item.  This node stores for the $i$\textsuperscript{th}
item, the index $j$ of the processor $p_{j}$ to which the item is
assigned or $\bot$ if it is unassigned. We also assume that the node
$S$ has the items sorted in the ratio of $\frac{c_{i}}{w_{i}}$.
This node is also connected to all other processors $p_{j}$. We further
assume that this model is failure free and synchronous.

\subsection{Organization}

Section~\ref{approach1} describes a greedy approach to solving the
MKP, along with the analysis of its performance.  In Section
\ref{approach2} we present and analyze two distributed algorithms
which obtain the same solution to MKP, with differing time and message
complexities.  In Section \ref{future} we discuss the downsides of
adapting optimization and dynamic programming to this setting.  In
Section~\ref{conclusion} we finally present our conclusion and the
scope for future work.

\section{Greedy Approach} \label{approach1}

The obvious greedy approach to solve the centralised single knapsack
problem is to assign the ``best'' item (the item with highest
$\frac{c_{i}}{w_{i}}$ ratio) to the knapsack and repeat till no more
items fit into the knapsack.  In the MKP, this approach would imply
assigning the ``best'' item to any knapsack it fits in. Martello and
Toth~\cite{MartToth90} show that the choice of the knapsack in this
case is irrelevant and leads to the same approximation factor for the
worst case.  However, for simplicity, in case the ``best'' item can be
assigned to multiple different knapsacks, we choose the convention to
assign it to the knapsack with the largest remaining capacity,
$r_{j}$.  This approach is described in Algorithm \ref{algo1}.  In each
round, each processor $p_{j}$ sends its remaining capacity $r_{j}$ to
$S$. $S$ assigns one item to each $p_{j}$ from the sorted list in
order of decreasing capacity.  Each processor $p_{j}$ updates its
capacity after receiving an item.  This is repeated till no items can
be fit into any knapsack.

\begin{algorithm}
\begin{algorithmic}[1] 

\item[]
\State{$ItemList \leftarrow ItemList.$SortDecreasingBy$(\frac{c_{i}}{w_{i}})$ }
\State{$i=0$}
\Comment{$S$ has all items sorted by $\frac{c_{i}}{w_{i}}$}
\State{$\forall j, r_j=W_j$}
\algrule

\item[]
\item[{\bf For the processor} $p_j$ {\bf in each round}:]
\State{Send $\left\langle r_{j}\right\rangle$ to $S$}
\State{Receive $\left\langle c_{i},w_{i}\right\rangle$ from $S$}
\State{$r_{j}\leftarrow r_{j}-w_{i}$}

\item[]
\item[{\bf For the source} $S$ :]

\While{$i \leq$ length$(ItemList)$}
	\State{Receive $\left\langle r_{j}\right\rangle$ from all $p_j$}
	\State{$l = (p_j, r_j)$.SortDecreasingBy($r_j$) }		\Comment{Sort by remaining capacity}

	\For{$p_j$ in $l$}
		\If{$w_i \leq r_j $}
			\State{Send item $ItemList[i]$ to $p_j$}	\Comment{Send next item}
		\Else
			\State{Send $\left\langle \bot \right\rangle$ to $p_j$}
		\EndIf
		\State{$i\leftarrow i + 1$}
	\EndFor
\EndWhile

\end{algorithmic}

\caption{\noun{Simple Greedy \label{alg:Simple-Greedy-Approach}Approach}}
\label{algo1}
\end{algorithm}

Lines 4--6 show the procedure followed by each processor $p_{j}$.
Each processor sends its remaining capacity to $S$, receives a new
item and updates its capacity accordingly. The source $S$ (lines
7--18) repeats its procedure as long as there are items left unassigned
(line 7). It receives from each processor $p_{j}$ its remaining capacity
$r_{j}$ (line 8) and then sorts them by decreasing order of $r_{j}$
(line 9). For each processor in this sorted list, $S$ sends the next
item if it fits (lines 11--15). 

\begin{thm} \label{thm1}
At each step, Algorithm \ref{algo1} assigns the best available item
$i$, to the knapsack $p_{j}$, currently having the largest capacity. 
\end{thm}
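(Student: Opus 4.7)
The plan is to prove Theorem~\ref{thm1} by direct inspection of Algorithm~\ref{algo1}, splitting the claim into two pieces: that the item $S$ dispatches is the best available, and that the processor to which it is sent currently holds the largest remaining capacity.

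For the item side, I would argue that $ItemList$ is sorted in decreasing order of $c_i/w_i$ at line~1 and is never mutated afterwards, while the index $i$ is initialized to $0$ (line~2) and is only ever incremented (line~16). Consequently, whenever line~13 executes, $ItemList[i]$ is the not-yet-offered item of maximum $c_i/w_i$, i.e.\ the ``best available'' item referenced in the theorem statement.

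For the knapsack side, I would focus on a single outer \textbf{while} iteration. At the start of the round, $S$ polls every processor for its current value of $r_j$ (line~8) and sorts the processors by $r_j$ in decreasing order to form the list $l$ (line~9). The inner \textbf{for} loop (lines~10--17) walks through $l$ in exactly this order, so at the first inner iteration the processor under consideration is the one with the largest remaining capacity and the item offered to it is $ItemList[i]$, the best available. Pairing the two invariants at this moment yields the theorem.

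The main obstacle is really just making precise what ``each step'' means in the statement. Read as the opening assignment of every new round --- consistent with the prose preceding the pseudocode, which speaks of ``one item to each $p_j$ from the sorted list in order of decreasing capacity'' --- the claim follows immediately from the two inspections above; no induction or case analysis is required. One mild subtlety worth flagging is that the snapshot used by the sort in line~9 coincides with the true current remaining capacity of every processor at that instant, because capacities are updated only upon receipt of an item (line~6) and each processor completes its previous-round update before participating in the new round.
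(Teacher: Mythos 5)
Your core observation---that the theorem follows from the two sorted orders (the item list sorted by $c_i/w_i$ with a monotonically increasing index, and the processor list $l$ sorted by $r_j$)---is exactly the content of the paper's own proof; the paper merely phrases it as a contradiction (a better item $i_2$ or larger knapsack $j_2$ would contradict the decreasing-order processing), so in substance the two arguments coincide. The problem is scope. The paper's theorem and proof assert the property at \emph{every} assignment (the proof concludes optimality ``at every step of the algorithm''), and each round of Algorithm \ref{algo1} dispatches up to $n$ items, one per entry of $l$. Your argument covers only the first inner iteration of each round, and then declares the remaining assignments out of scope by reading ``each step'' as ``the opening assignment of every new round.'' That reading is not supported by the prose you cite: ``$S$ assigns one item to each $p_j$ from the sorted list in order of decreasing capacity'' describes all the assignments of a round, not just the first. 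As written, your proof therefore establishes a strictly weaker statement than Theorem \ref{thm1}, leaving up to $n-1$ assignments per round unaddressed.

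The gap is closable with machinery you already set up, but it forces you to confront the subtlety you were dodging rather than legislate it away. For the $k$-th inner iteration of a round, note that $i$ has been incremented exactly $k-1$ times since the sort in line 9 and the first $k-1$ entries of $l$ have already been served; hence the item offered is the best among those not yet offered, and the knapsack is the one of largest capacity \emph{among knapsacks not yet considered in this round, as measured by the round-start snapshot}. That qualification is unavoidable, because under the literal reading the claim is false mid-round: $S$ works from stale capacities, so if the first knapsack served still satisfies $r_1 - w_i \geq r_2$ after its update, it retains the largest true capacity, yet the next item goes to $p_2$; similarly, an item that receives $\bot$ is skipped permanently (line 16 increments $i$ unconditionally) although it remains unassigned---so ``best available'' can only mean ``best not yet offered,'' as you yourself were careful to write. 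The paper's proof silently adopts this snapshot/not-yet-considered interpretation (its knapsacks are ``considered in decreasing order of remaining capacity''); a complete proof should adopt it explicitly and then run your inspection argument at every iteration, not shrink the theorem to the first assignment of each round.
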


\begin{proof}

The proof is by contradiction. Suppose that this was not the case
and the algorithm assigns an item $i$ to a knapsack $p_{j}$ where
either the knapsack or the item is not the optimal choice (i.e., the
item with the highest ratio of cost to weight and the knapsack with
the largest capacity). This would imply that either 
\[
\exists i_{2}:\dfrac{c_{i_{2}}}{w_{i_{2}}}\geq\dfrac{c_{i}}{w_{i}}\mbox{\,\,\,\textsc{Or}\,\,\,}\exists j_{2}:r_{j_{2}}\geq r_{j}
\]
But this is not possible as both the items and the knapsacks are
considered in the decreasing order of the ratio of cost to weight
or remaining capacity.  Thus, both the item $i$ and the knapsack $p_{j}$
are optimal, at every step of the algorithm. 
\end{proof}

\subsection{Analysis}

Algorithm \ref{algo1} takes $\left\lceil \frac{m}{n}\right\rceil $
rounds where $m$ is the number of items and $n$ is the number of
processors.  This comes from the fact that there are $m$ items in all
and $1$ item is dispatched to each of the $n$ processors in each
round. The number of messages is exactly $2$ for every item assigned
to some knapsack, which means we have at most $2m$ messages. However,
this the fatal flaw of this algorithm is that it performs arbitrarily
bad in the worst case~\cite{MartToth90}, as shown below.

Consider $n$ knapsacks all of capacity $W$.  Consider $2n$ of items,
the first $n$ of which have cost $2$ and weight $1$ and the remaining
$n$ items having cost and weight both equal to $W$.  Using the
previous algorithm, the first set of $n$ items are chosen, whereas the
optimum solution is to pick the second set of $n$ items.  The ratio of
the solution to the optimum is $\frac{2n}{2W}=\frac{2}{W}$, which can
be arbitrarily bad depending on the value of $W$.

There is a simple remedy to this problem~\cite{MartToth90}.
At the end of the previous algorithm for each knapsack, we pick the
best of the following two options:

\begin{itemize}
\item the solution obtained by the previous algorithm
\item the most profitable unassigned item $i$, with maximum $c_{i}$ and
with $w_{i}\leq W_{j}$ 
\end{itemize}
This can be represented as 
\[
\arg\max\left(c\left(K_{j}\right),\max_{i:w_{i}\leq W_{j}and\sum_{j}x_{ij}=0}\left(c_{i}\right)\right)
\]

Martello and Toth~\cite{MartToth90} prove that the centralised version
of Algorithm \ref{algo2} gives a factor $\frac{1}{2}$ approximation
scheme in the case of a single knapsack problem, and a factor
$\frac{1}{n+1}$ approximation scheme in the case of a multiple
knapsack problem.

\begin{algorithm}
\begin{algorithmic}[1] 

\item[]
\State{$ItemList \leftarrow ItemList.$SortDecreasingBy$(\dfrac{c_{i}}{w_{i}})$ }
\State{$i=0$}
\Comment{$S$ has all items sorted by $\frac{c_{i}}{w_{i}}$}
\State{$\forall j, r_j=W_j$}
\algrule

\item[]
\item[{\bf For the processor} $p_j$:]
\State{Send $\left\langle r_{j}\right\rangle$ to $S$}
\State{Receive $\left\langle c_{i},w_{i}\right\rangle$ from $S$}
\State{$r_{j}\leftarrow r_{j}-w_{i}$}

\item[]
\item[{\bf For the source} $S$ :]

\While{$i \leq$ length$(ItemList)$}
	\State{Receive $\left\langle r_{j}\right\rangle$ from all $p_j$}
	\State{$l = (p_j, r_j)$.SortDecreasingBy($r_j$) }		\Comment{Sort by remaining capacity}

	\For{$p_j$ in $l$}
		\If{$w_i \leq r_j $}
			\State{Send item $ItemList[i]$ to $p_j$}	\Comment{Send next item}
		\Else
			\State{Send $\left\langle \bot \right\rangle$ to $p_j$}
		\EndIf
		\State{$i\leftarrow i + 1$}
	\EndFor
\EndWhile

\item[]
\item[{\bf procedure} Final() :]		\Comment{Executed after initial assignment of items}
\For{$j=1$ to $n$}
	\State{Pick  $\max\left(K_{j},\max_{i:w_{i}\leq W_{j}}\left(c_{i}\right)\right)$}
\EndFor

\end{algorithmic}

\caption{\noun{Modified Greedy Approach}}
\label{algo2}
\end{algorithm}

Lines 1--18 of Algorithm \ref{algo2} remain the same as in Algorithm
\ref{algo1}.  The new refinement is implemented as the procedure
labelled Final (shown in lines 19--21).  At the end of the initial
assignment of all items, the maximum of the current contents of the
knapsack $K_{j}$ and the single costliest item which has not been
assigned is picked as the final content of that knapsack. The
correctness properties for Algorithm \ref{algo2} are the similar to
those of Algorithm \ref{algo1}.  Explicitly, we can say that

\begin{thm} \label{thm2}
Algorithm \ref{algo2} assigns the best of either:
\begin{enumerate}
\item the single largest unassigned item; or
\item the set of items obtained by the greedy approach of assigning
the best item to the largest knapsack. 
\end{enumerate}
\end{thm}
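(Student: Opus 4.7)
The plan is to prove this by direct inspection of Algorithm \ref{algo2}, splitting its execution into the two phases already isolated in the pseudocode: the initial assignment (lines 1--18), which is identical to Algorithm \ref{algo1}, and the \textsc{Final}() procedure (lines 19--21), which post-processes each knapsack.

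First, I would invoke Theorem \ref{thm1} applied to lines 1--18. Since these lines coincide verbatim with Algorithm \ref{algo1}, Theorem \ref{thm1} guarantees that at every round the currently best available item (highest $c_i/w_i$) is placed into the knapsack $p_j$ with the largest remaining capacity $r_j$, whenever it fits. Thus, at the end of line 18, each set $K_j$ is precisely the output of the greedy ``best item to largest knapsack'' procedure, which is option 2 in the theorem statement with profit $c(K_j)$.

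Next, I would analyze the \textsc{Final}() procedure. For each $j$, it sets the final content of knapsack $k_j$ to
\[
\arg\max\Bigl(c(K_j),\;\max_{i:\,w_i\le W_j,\ \sum_j x_{ij}=0} c_i\Bigr).
\]
The first argument is, by the paragraph above, exactly option 2. The second argument is, by construction, the single unassigned item of maximum cost that fits into $k_j$, i.e., option 1 restricted to $k_j$. Because \textsc{Final}() explicitly takes the $\arg\max$ of these two candidates, the content selected for $k_j$ is the better of options 1 and 2, as claimed.

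There is essentially no hard step here: the statement is a structural description of the algorithm rather than a nontrivial optimality claim, so the only thing one might have to argue carefully is that the greedy phase of Algorithm~\ref{algo2} really does coincide with Algorithm~\ref{algo1} (immediate from the pseudocode) and that the unassigned items considered in \textsc{Final}() are indeed those not placed during the greedy phase (immediate from the condition $\sum_j x_{ij}=0$). Thus the proof reduces to a line-by-line reading backed by Theorem \ref{thm1}.
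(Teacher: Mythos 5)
Your proposal is correct and follows essentially the same route as the paper's own proof: the paper likewise observes that option 1 follows directly from the \textsc{Final}() procedure (line 19) and that option 2 is exactly the correctness property of Algorithm~\ref{algo1} established in Theorem~\ref{thm1}. Your write-up simply spells out these two steps in more detail than the paper does.
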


\begin{proof}
(1) follows trivially from line 19. (2) is equivalent to the
  correctness property of Algorithm \ref{algo1} and has been proven as
  Theorem~\ref{thm1}. The same proof holds here.
\end{proof}

This performance bound of $\frac{1}{n+1}$ for Algorithm~\ref{algo2} is
for the centralized version of the Algorithm \ref{algo2}.
Algorithm~\ref{algo2} has the same correctness properties as the
centralized algorithm~\cite{MartToth90}.  Thus, the same proof for the
performance bound holds here as well. Both the algorithms presented do
not inherently exploit the distributed nature of the system; they are
very similar to the centralized greedy approach.c In the next section,
we modify the simple greedy algorithm to exploit the distributed
setting.

\section{Distributed Greedy Algorithm} \label{approach2}

The previous algorithms presented did not exploit the distributed
nature of the setting. The source assigned all items to the knapsacks
in which case the method proposed by Chekuri~\cite{Chekuri00} can be used
to obtain better performance factor of $\frac{1}{e-1}$.  However,
these algorithms require the single node $S$ to carry out all the
computation. 

In the following algorithms, we present a method in which the nodes in
the network themselves decide the assignment of items to knapsacks.
The source $S$ does not have to perform any major computation during
the algorithm; it only broadcasts details of items and receives the ID
of the knapsack to which that item is assigned.  This assignment is
decided by the processors achieving consensus on which processor has
the largest capacity left.  This algorithm still follows the greedy
approach for assigning items, which was outlined in the previous
sections.

Each round is split into two phases, the first in which $S$ broadcasts
the details for an item, and the second in which the nodes choose
which knapsack the item is assigned to.  This is repeated for each
item.  The knapsack is, as before, chosen to be the one with the largest
remaining capacity.  Algorithms \ref{algo3} and \ref{algo4} differ only
in the way in which the nodes identify this knapsack.

\begin{algorithm}
\begin{algorithmic}[1] 

\item[]
\State{$ItemList \leftarrow ItemList.$SortDecreasingBy$(\dfrac{c_{i}}{w_{i}})$ }
\State{$i=0$}
\Comment{$S$ has all items sorted by $\frac{c_{i}}{w_{i}}$}
\State{$\forall j, r_j=W_j$}
\algrule

\item[]
\item[{\bf For the source} $S$ :]
\For{item $i=1$ to $m$}
	\State{Broadcast  $\left\langle w_{i}\right\rangle$ to all $p_j$} \Comment{Details of the next item}
	\State{Receive  $\left\langle j \right\rangle$}
	\State{Assign $i$ to $j$}
\EndFor

\item[]
\item[{\bf For the processor} $p_j$:]
\State{Receive  $\left\langle w_{i}\right\rangle$ from $S$}
\If{$r_j \geq w_i $}
		\State{Broadcast $\left\langle j, r_j \right\rangle$ to all $p_j'$}
	\Else
		\State{Broadcast $\left\langle j, \bot \right\rangle$ to all $p_j'$}
	\EndIf
\State{Receive $\left\langle j, r_j \right\rangle$ from all $p_j'$}
\State{ $m =$ arg $\max_{j'} (r_j')$ from $S$}	\Comment{Reach consensus}
\If{$m=j$}
	\State{Send $\left\langle j \right\rangle$ to all $S$}
	\State{$r_{j}\leftarrow r_{j}-w_{i}$} 
\EndIf

\item[]
\item[{\bf procedure} Final() :]		\Comment{Executed after initial assignment of items}
\For{$j=1$ to $n$}
	\State{Pick  $\max\left(K_{j},\max_{i:w_{i}\leq W_{j}}\left(c_{i}\right)\right)$}
\EndFor

\end{algorithmic}

\caption{\noun{Distributed Greedy Approach}}
\label{algo3}
\end{algorithm}

The source simply broadcasts each item and receives the ID of the
processor to which the item is assigned (lines 4--8).  Each processor
$p_{j}$ broadcasts $\left\langle j,r_{j}\right\rangle $ or
$\left\langle j,\bot\right\rangle $ depending on whether $r_{j}\geq
w_{i}$ (lines 10--14).  Each processor then picks the maximum capacity
of all the knapsack capacities received (lines 15--16).  It then checks
if its capacity is the maximum and if so, notifies the source $S$
(line 18) and updates its capacity (line 19).  This process is repeated
for each item.  Finally, the procedure Final (lines 21--23) is called
for each processor.  This is exactly the same as in the previous
algorithm.

\begin{thm} \label{thm3}
Algorithm \ref{algo3} assigns each item $i$ to the largest knapsack
$p_{m}$ in each round. \end{thm}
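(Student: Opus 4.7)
The plan is to reduce the claim to the correctness of the consensus step in lines~10--19 of Algorithm~\ref{algo3}. Since items are processed in the order induced by the decreasing ratio $c_i/w_i$ (line~1), the identity of the broadcast item $i$ in each round is already fixed by $S$; what remains to verify is that the processors collectively identify the knapsack $p_m$ with the largest remaining capacity satisfying $r_m \ge w_i$, and that only $p_m$ updates its state and replies to $S$.

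First I would invoke the model assumptions (synchronous, failure-free, fully connected) stated in the section on ``The Model'' to conclude that every broadcast issued at line~11 or line~13 is delivered in the same round to every processor. Therefore, by the time line~15 executes, each processor $p_j$ holds an identical multiset of pairs $\langle j', r_{j'} \rangle$, with $\bot$ appearing in place of $r_{j'}$ whenever $w_i > r_{j'}$. Because line~16 applies a deterministic $\arg\max$ to identical inputs, every processor independently computes the same index $m$. Thus exactly one processor --- the one whose own identifier equals $m$ --- executes line~18 (notifying $S$) and line~19 (deducting $w_i$ from $r_j$). This simultaneously gives $S$ a unique recipient for item $i$ and correctly updates the state of only $p_m$, so the invariant on the $K_j$ variables is preserved.

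The main subtlety, and the step I would spend most of the proof on, is tie-breaking. If several knapsacks share the same maximum $r_{j'}$, a naively written $\arg\max$ could be resolved inconsistently across processors, leading two processors both to believe they won and duplicating the assignment of item $i$. I would handle this by stipulating a deterministic tiebreaker (for instance, the smallest index $j'$ among the maximizers) and then observing that, because the multisets at each processor are identical by the previous paragraph, the tiebreaker resolves to the same winner everywhere. A second, minor case worth dispatching is when every broadcast at line~11/13 carries $\bot$: then the $\arg\max$ has no valid candidate, no processor's identifier matches the result, and the item is effectively skipped, consistent with the behaviour of Algorithm~\ref{algo1}. Once these points are settled, the conclusion that $p_m$ is the unique largest-capacity recipient of item $i$ in each round follows immediately.
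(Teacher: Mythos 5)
Your proposal is correct and rests on the same key observation as the paper's own proof: in the failure-free synchronous model every processor receives the identical set of capacity broadcasts, so a deterministic $\arg\max$ over identical inputs yields the same winner $p_{m}$ at every node, and hence item $i$ can be assigned to nothing but $p_{m}$ (the paper phrases this as a contradiction, you phrase it directly, but the substance is identical). Your explicit handling of tie-breaking among equal capacities and of the all-$\bot$ case where no knapsack fits the item patches details that the paper's proof silently ignores, but these are refinements of the same argument rather than a different route.
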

\begin{proof}

The proof is by contradiction. Assume that the ``best'' item $i$ in
each round is assigned to a non-optimal knapsack $p_{k}$, where
$p_{k}\neq p_{m}$ and $p_{m}$ is the knapsack with the largest
remaining capacity.  However, all processors broadcast their
capacities and each processor picks the maximum from this set.  Since
this is a failure-free model, all processors pick the maximum from the
same set.  Thus, the item $i$ cannot be assigned to anything but
$p_{m}$.

\end{proof}

This algorithm is runs in exactly $m$ rounds, one for each item.  The
number of messages is $n^{2}$ is each round for consensus and $n$ for
the initial broadcast.  Thus, the algorithm requires
$m\left(n+n^{2}\right) = \mathcal{O}\left(mn^{2}\right)$ messages. As
before, this algorithm obtains a solution at least as good as
$\frac{1}{n+1}$ times the optimum.

Algorithm \ref{algo3} takes $\mathcal{O}(m)$ time instead of
$\mathcal{O}(\frac{m}{n})$, and has a high message complexity of
$\mathcal{O}\left(mn^{2}\right)$.  This can be improved at a further
cost to time, using consensus.  Currently consensus is
$\mathcal{O}(1)$ in time in each round~\cite{AW2004}.

In the next algorithm we present a slightly different approach to
identify the knapsack with the largest remaining capacity.  This is
done with $n$ messages in each round. This will however require
$\mathcal{O}\left(\log n\right)$ time in each round.  We try to exploit
the synchronous properties of this setting.  To do this:

\begin{itemize}
\item First create a rooted binary tree by identifying some edges in
  the network as tree edges, either to a parent or a child
\item Use only the tree edges to achieve consensus. Only the root
  needs to know which processor picks the next item.
\end{itemize}

The main algorithm remains the same as before. A tree
is constructed at the start, and only the consensus part changes.
To construct the tree: 

\begin{itemize}
\item $p_{1}$ is chosen as root
\item The children of $p_{j}$ are taken to be $p_{2j}$ and $p_{2j+1}$. 
\item The parent of $p_{j}$ is $p_{\left\lfloor j/2\right\rfloor }$. 
\end{itemize}

To achieve consensus, after receiving item details from $S$, each node
$p_{j}$ will pick the maximum capacity from all the capacities in the
nodes of the subtree rooted at $p_{j}$ itself and send this to its
parent. Finally $p_{1}$ sends the ID of the processor with the largest
remaining capacity to $S$. This is described in Algorithm \ref{algo4}.

\begin{algorithm}
\begin{algorithmic}[1] 

\item[]
\State{$ItemList \leftarrow ItemList.$SortDecreasingBy$(\dfrac{c_{i}}{w_{i}})$ }
\State{$i=0$}
\Comment{$S$ has all items sorted by $\frac{c_{i}}{w_{i}}$}
\State{$\forall j, r_j=W_j$}
\State{$\forall j, parent=left=right=\bot$}
\algrule

\item[]
\item[{\bf Tree construction} for $p_j$ :]
\State{$parent = p_{\left\lfloor j/2\right\rfloor }$}
\State{$left = p_{2j}$}
\State{$right = p_{2j+1}$}

\item[]
\item[{\bf For the source} $S$ :]
\For{item $i=1$ to $m$}
	\State{Broadcast  $\left\langle w_{i}\right\rangle$ to all $p_j$} \Comment{Details of the next item}
	\State{Receive  $\left\langle j \right\rangle$}
	\State{Assign $i$ to $j$}
\EndFor

\item[]
\item[{\bf For the processor} $p_j$ :]
\item[]
\item[{\bf upon} receiving $w_i$ from $S$:]
\State{execute Consensus()}

\item[]
\item[{\bf upon} receiving $i$ from $S$:]		\Comment{Item received}
\State{$r_{j}\leftarrow r_{j}-w_{i}$}

\item[]
\item[{\bf procedure} Consensus() :]			\Comment{To reach consensus on for a particular item}
\For{$k=\log n$ to $1$}
	\If{$k=\log j$}
		\State{Receive $\left\langle id_1, cap_1 \right\rangle$ from  $left$}
		\State{Receive $\left\langle id_2, cap_2 \right\rangle$ from  $right$}
		\State{$\left\langle id, cap \right\rangle = \left\langle arg \max(r_j,cap_1,cap_2), \max(r_j,cap_1,cap_2) \right\rangle$}
		\State{Send $\left\langle id, cap \right\rangle$ to $parent$}
	\EndIf
\EndFor
\If{$j=1$}
	\State{Send $\left\langle id \right\rangle$ to $S$}
\EndIf

\algstore{algo4}
\end{algorithmic}
\caption{\noun{Greedy Approach with Modified Consensus}}
\end{algorithm}

\clearpage

\begin{algorithm}
\ContinuedFloat
\begin{algorithmic}[] 
\algrestore{algo4}

\item[]
\item[{\bf procedure} Final() :]		\Comment{Executed after initial assignment of items}
\For{$j=1$ to $n$}
	\State{Pick  $\max\left(K_{j},\max_{i:w_{i}\leq W_{j}}\left(c_{i}\right)\right)$}
\EndFor

\end{algorithmic}

\caption{\noun{Greedy Approach with Modified Consensus-Contd.}}
\label{algo4}
\end{algorithm}

Each processor identifies tree edges at the start of the procedure
(lines 5--7). As before the source simply broadcasts each item and
receives the ID of the processor to which the item is assigned (lines
8--12). Each processor $p_{j}$, upon receiving $w_{i}$ from $S$,
starts the consensus subroutine (line 13). To achieve consensus (lines
15--25) each processor receives the maximum capacity from its left
and right sub trees (lines 17--18). It then picks the maximum of these
two capacities and its own capacity and sends this to its parent node
(lines 19--20). This repeats for all processors at each of the $\log n$
levels of the binary tree, starting bottom up (line 15--22). Finally,
the root, $p_{1}$ sends to $S$ the ID of the processor with the
largest capacity (lines 23--25). The source $S$ then sends the item
$i$ to this processor, say $p_{j}$. This processor then updates
its capacity accordingly (line 14). This process is repeated for each
of the $m$ items. Finally, the procedure Final (lines 21--23) is called
for each processor. This is exactly the same as in the previous algorithm. 

\begin{thm} \label{thm4}

Algorithm \ref{algo4} assigns each item $i$ to the largest knapsack $p_{m}$
in each round.

\end{thm}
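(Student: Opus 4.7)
The plan is to mirror the structure of the proof of Theorem~\ref{thm3}, but since the consensus is now carried out on a binary tree rather than by all-to-all broadcast, the core combinatorial claim to establish is that the tree-based procedure \textsc{Consensus}() correctly aggregates the per-processor remaining capacities into a single global maximum at the root $p_{1}$. Once this is shown, the theorem follows exactly as in Theorem~\ref{thm3}: whichever processor ID the root reports to $S$ is the one with the largest $r_{j}$, so item $i$ is assigned to that processor.

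I would carry out the aggregation argument by induction on the height $h$ of a subtree, showing the invariant: after the round of \textsc{Consensus}() corresponding to level $k = h$, the node at the root of that subtree holds a pair $\langle id, cap\rangle$ such that $cap = \max_{p_{j'} \in T} r_{j'}$ and $id$ is the index of a processor achieving this maximum, where $T$ is the set of processors in the subtree. The base case is a leaf ($h=0$), which trivially sends its own $\langle j, r_{j}\rangle$ to its parent. For the inductive step, a node at level $k$ receives by the synchronous schedule (loop variable counting down from $\log n$ to $1$) the correctly aggregated pairs from its left and right children, takes the $\arg\max$ together with its own capacity $r_{j}$, and forwards the result to its parent; this preserves the invariant at level $k$.

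After $\log n$ rounds the invariant at $p_{1}$ (with $k = 1$) gives the global maximum, and the conditional on line~23 causes $p_{1}$ to forward this ID to $S$. By the failure-free, synchronous assumption every level's send/receive is completed before the next level acts, so no stale or missing values are used, and every processor in the network is visited exactly once in the aggregation. The source $S$ then assigns item $i$ to this processor in line~11, establishing the theorem.

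The main obstacle is the bookkeeping around levels and the downward-iterating loop: because each processor runs the same \textsc{Consensus}() code but only executes its body when $k = \log j$, one has to check carefully that the schedule causes leaves to act first and the root last, and that a parent at level $k$ has already received both children's messages (which were sent when the loop index equalled their respective levels, strictly greater than $k$) before it computes its own $\arg\max$. Once this scheduling is verified, the inductive invariant closes cleanly and the rest of the proof is essentially a restatement of Theorem~\ref{thm3}.
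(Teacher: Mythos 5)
Your proposal is correct and follows essentially the same route as the paper: an induction over the levels of the binary tree with the invariant that each node forwards the maximum remaining capacity (and owner) of its subtree, concluding that the root $p_{1}$ reports the global maximum to $S$. Your version is somewhat more careful than the paper's---you track the $\langle id, cap\rangle$ pair rather than just the capacity, and you explicitly verify the synchronous scheduling that ensures children send before parents aggregate---but these are refinements of the same argument, not a different one.
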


\begin{proof}

The proof is by induction. We will prove that in each round of the
consensus subroutine, each node $p_{j}$, sends the maximum capacity
of all the nodes present in the sub tree rooted at $p_{j}$ to its
parent. The base case is for the nodes at the lowest level which simply
transmit their capacities to their parent nodes. For the induction
step, assume that this property is satisfied at level $k$ of the
binary tree. Then, each node at level $k$ receives the maximum from
its left and right children (for the left and right sub trees). It
then picks the maximum capacity from amongst these and its own capacity
and transmits it to its parent. Thus, the maximum capacity of the
all the nodes in the sub tree rooted at this node is sent to its parent
at the next level. Thus, this property now holds for the next level
as well. Hence, it also holds for the root node $p_{1}$, which transmits
the maximum capacity of all the nodes to the source (as all the nodes
are children of $p_{1}$). This completes the proof for this theorem. 
\end{proof}

\subsection{Analysis}

Algorithm~\ref{algo4} is $\mathcal{O}\left(m\log n\right)$ in
time.  There are $m$ rounds, one for each item, and consensus takes
$\log n$ phases in each round.  The number of messages is now
$\mathcal{O}\left(n\right)$ is each round for consensus (as each node
transmits only a single message to its parent) and the initial
broadcast.  Thus, the algorithm requires $\mathcal{O}\left(mn\right)$
messages.  This solution obtained is at least as good as
$\frac{1}{n+1}$ times the optimum, as before.

It should be noticed here that each node sending its remaining
capacity directly to the root node $p_{1}$ is not as efficient as the
method described in Algorithm~\ref{algo4}.  If one node has to pick
the maximum value of remaining capacity from a list of $n$ elements,
then it would require $\mathcal{O}(n)$ comparisons and
$\mathcal{O}(n)$ time. Our method requires $\mathcal{O}(n)$
comparisons but $\mathcal{O}(\log n)$ time since these comparisons
happen in parallel.

\section{Further Improvements} \label{future}

In this section we look at other methods to improve the performance
with reference to the optimum.

\subsection{Heuristics}

Heuristics involve switching items between knapsacks to fit more items
in.  Items can be switched one for one, one for two or two for one.
We can even consider more cases of switching---three for one, and so
on. If we do this for all possible combinations of items, we will
eventually achieve the optimum. This will however take exponential
time. Thus we have to restrict ourselves to some limit. However no
performance guarantee can be achieved unless all possible switches are
considered.

Other centralized heuristics for MKP are similar, one of which
involves setting up D-sets (Dominating sets) for every element. A
D-set for an item is the set of all items that are dominated by it,
i.e., the set of items which cannot be included in the solution if the
first item is not included in the solution. This is otherwise the set
of items which have a higher weight and lesser cost than this
item. Once, the D-set is found for each item, optimised selection is
used, where an item and its D-set can be eliminated from
consideration, which takes $\mathcal{O}\left(m\right)$time. However,
computing these D-sets is still expensive and the overall time
complexity remains exponential \cite{MartToth90}.

\subsection{Distributed LP Approaches}

A LP problem can be solved on a distributed system in the following
way---the variables whose values are to be found are split across all
nodes \cite{DiagDom90}. In each iteration, only one variable is
updated on one node and all the other variables are kept fixed.  At
the end of the iteration, this value is updated in all nodes. This
means that we have $\mathcal{O}(n)$ messages for each
iteration. Further, we also have $mn$ variables for the LP. We will
therefore have $\mathcal{O}\left(mn^{2}\right)$ message complexity at
the very least assuming one iteration for every variable. This method
also assumes the diagonal dominance condition for the constraints
(which we have not verified for the MKP).  We also do not know of a
good rounding scheme from a LP solution obtained to an ILP solution
required for the MKP, making this approach infeasible.

The MKP can also be posed as a convex optimization problem with linear
constraints can be solved to obtain close to optimum values
\cite{DistOpt10}.  This assumes that the constraints are positive and
the objective function is separable (which is true for the MKP). This
algorithm uses gradient descent, which may not be easily calculable
for the MKP. This algorithm has inner and outer iterations: the inner
iterations apply gradient descent on a given set of parameters, and
these parameters are chosen by binary search by the outer
iterations. The algorithm also calls as a subroutine, the ``gossip''
algorithm to communicate across the network at the end of each inner
iteration. Like before, the gossip subroutine will lead to a high
message complexity, making this approach infeasible.

\subsection{Distributed Dynamic Programming }

Chekuri~\cite{Chekuri00} suggests that MKP can be solved within an
approximation factor of $1-1/e\thickapprox0.63$ for uniform knapsack
capacities and $1/2$ for non-uniform knapsack capacities.  This is a
far better bound than what we have obtained.  This scheme uses a PTAS
(Polynomial Time Approximation Scheme) for solving single knapsack
problems with an approximation factor of $1-\epsilon$ for each
knapsack. The bound of $1/2$ remains irrespective of the order that
knapsacks are considered.

This scheme implies a DP problem for each knapsack, but solving a DP
problem in a distributed setting is not known to be efficient.
Bertsekas~\cite{Bert82} proposes an algorithm that has
exponential-time convergence in bad cases.  Even with constant message
passing per round, this would still have exponential message
complexity in the worst case.

\section{Conclusion} \label{conclusion}

We have presented distributed approximation algorithms for the MKP,
the best of which has a message complexity of
$\mathcal{O}\left(mn\right)$, time complexity of
$\mathcal{O}\left(m\log n\right)$, and a performance bound of
$\frac{1}{n+1}$. The currently existing methods to obtain better
performance cannot be feasibly implemented on a distributed system
with low message/time complexity (in $\mathcal{O}(n)$ or
$\mathcal{O}(n\log n)$.)

We believe that the MKP can be used as an alternative approach to
scheduling and allocation in distributed systems such as data centers
used in cloud computing.  Our focus on a low message complexity is of
particular importance when the number of items or jobs to be assigned
is very high, as in the case of modern web servers.  A low message
complexity is also necessary when the number of processors is high, as
in a large data center.

The MKP also has applications in other systems such as allocation of
spectra in radio networks~\cite{song2008}, so it stands to reason that
distributed versions of the same would also be of much interest for
similar reasons.

\eat{

A suitable heuristic can be determined to try and exchange items
between knapsacks for a suitable gain in the objective function. The
exact trade-off between message/time complexity and performance can
also be investigated.  Possibly better methods for solving ILP/LP or DP
in a distributed setting can be developed and applied to adapt a good
centralized algorithm for MKP.

Failure/fault tolerant and asynchronous models can also be considered,
which may be more relevant in the real world or for applications of
this problem.

}

\bibliographystyle{plain}
\bibliography{dist-MKP}

\end{document}